\newtheorem{thm}{Theorem}
\newtheorem{lem}{Lemma}
\newtheorem{prop}{Proposition}
\theoremstyle{remark}
    \newtheorem{rem}{Remark}
\newcommand{\norm}[1]{\Vert #1 \Vert}
\newcommand{\normtwo}[1]{\Vert #1 \Vert_2}
\newcommand{\normone}[1]{\Vert #1 \Vert_1}
\newcommand{\gr}[1]{( #1 )}
\newcommand{\Gr}[1]{\big( #1 \big)}
\newcommand{\GR}[1]{\Big( #1 \Big)}
\newcommand{\GRg}[1]{\bigg( #1 \bigg)}
\newcommand{\AutoGroup}[1]{\left( #1 \right)}
\newcommand{\abs}[1]{\vert #1 \vert}
\newcommand{\set}[1]{\{ #1 \}}
\newcommand{\AutoSet}[1]{\left\{ #1 \right\}}
\newcommand{\AutoVect}[1]{\left[ #1 \right]}
\newcommand{\C}{\mathbb C}
\newcommand{\eps}{\varepsilon}
\newcommand{\bs}[1]{\boldsymbol{#1}}
\newcommand{\kth}{$k$th} 
\newcommand{\cs}{compressed sensing}
\newcommand{\Wlog}{Without loss of generality}
\newcommand{\argmin}{\mathop{\mathrm{argmin}}}
\newcommand{\rank}{\mathop{\mathrm{rank}}}
\begin{document}
\bibliographystyle{IEEEtran}

\title{General Deviants: An Analysis of Perturbations\\
in Compressed Sensing}
\author{\IEEEauthorblockN{Matthew A. Herman and Thomas Strohmer}
\thanks{The authors are with the Department of Mathematics, University of California,
Davis, CA 95616-8633, USA (\mbox{e-mail}: \texttt{\{mattyh,
strohmer\}@math.ucdavis.edu}).}
\thanks{This work was partially supported by NSF Grant No. DMS-0811169
and NSF VIGRE Grant No. DMS-0636297.}}

\markboth{Preprint}{Herman & Strohmer: General Deviants: An Analysis
of Perturbations in Compressed Sensing} 

\maketitle

\begin{abstract}
We analyze the Basis Pursuit recovery of signals with general
perturbations. Previous studies have only considered partially
perturbed observations $\bs{Ax} + \bs{e}$. Here, $\bs{x}$ is a
signal which we wish to recover, $\bs{A}$ is a full-rank matrix with
more columns than rows, and $\bs{e}$ is simple \emph{additive}
noise. Our model also incorporates perturbations~$\bs{E}$ to the
matrix~$\bs{A}$ which result in \emph{multiplicative} noise. This
completely perturbed framework extends the prior work of Cand\`es,
Romberg and Tao on stable signal recovery from incomplete and
inaccurate measurements. Our results show that, under suitable
conditions, the stability of the recovered signal is limited by the
noise level in the observation. Moreover, this accuracy is within a
constant multiple of the best-case reconstruction using the
technique of least squares. In the absence of additive noise
numerical simulations essentially confirm that this error is a
linear function of the relative perturbation.
\end{abstract}

\section{Introduction}
Employing the techniques of \cs\ (CS) to recover signals with a
sparse representation has enjoyed a great deal of attention over the
last 5--10 years. The initial studies considered an ideal
unperturbed scenario:
\begin{equation} \label{eq:Observation}
\bs{b} \;=\; \bs{Ax}.
\end{equation}
Here $\bs{b} \in\C^{m}$ is the observation vector, $\bs{A}
\in\C^{m\times n}$ is a full-rank measurement matrix or system model
(with $m\le n$), and $\bs{x} \in\C^{n}$ is the signal of interest
which has a sparse, or almost sparse, representation under some
fixed basis. More recently researchers have included an
\emph{additive} noise term~$\bs{e}$ into the received signal
\cite{CanRIP, CanRomTao_Noise, DonEladTemlyakov, Tropp_Relax}
creating a \emph{partially perturbed model}:
\begin{equation} \label{eq:Pert_observation}
\bs{\hat{b}} \;=\; \bs{Ax} + \bs{e}
\end{equation}
This type of noise typically models simple errors which are
uncorrelated with~$\bs{x}$.

As far as we can tell, practically no research has been done yet on
perturbations $\bs{E}$ to the matrix~$\bs{A}$.\footnote{A related
problem is considered in~\cite{Gribonval_UniteAverage} for greedy
algorithms rather than $\ell_1$-minimization, and in a multichannel
rather than a single channel setting; it mentions using different
matrices on the encoding and decoding sides, but its analysis is not
from an error or perturbation point of view.}${}^{,}$\footnote{At
the time of revising this manuscript we became aware of an earlier
study~\cite{BlumenDavies_CSSourseSeparation} which discusses the
error resulting from estimating the mixing matrix in source
separation problems. However, it only covers strictly sparse
signals, and its analysis is not as in   depth as presented in this
manuscript.} Our \textbf{\emph{completely perturbed model}} extends
(\ref{eq:Pert_observation}) by incorporating a perturbed sensing
matrix in the form of
$$\bs{\hat{A}} = \bs{A}+\bs{E}.$$
It is important to consider this kind of noise since it can account
for precision errors when applications call for physically
implementing the measurement matrix~$\bs{A}$ in a sensor. In other
CS scenarios, such as when~$\bs{A}$ represents a system
model,~$\bs{E}$ can absorb errors in assumptions made about the
transmission channel. This can be realized in
radar~\cite{HermanStrohmer_CSRadar}, remote
sensing~\cite{FannjiangYanStrohmer_CSRemote}, telecommunications,
source separation~\cite{Gribonval_UniteAverage,
BlumenDavies_CSSourseSeparation}, and countless other problems.
Further,~$\bs{E}$ can also model the distortions that result when
discretizing the domain of analog signals and systems; examples
include jitter error and choosing too coarse of a sampling period.

In general, these perturbations can be characterized as
\emph{multiplicative} noise, and are more difficult to analyze than
simple additive noise since they are correlated with the signal of
interest. To see this, simply substitute $\bs{A} = \bs{\hat{A}} -
\bs{E}$ in (\ref{eq:Pert_observation});\footnote{It essentially
makes no difference whether we account for the perturbation~$\bs{E}$
on the ``encoding side''~(\ref{eq:Pert_observation}), or on the
``decoding side''~(\ref{eq:Complete_Pert_BP}). The model used here
was chosen so as to agree with the conventions of classical
perturbation theory which we use in
Section~\ref{sect:Classical_Pert_Analysis}.} there will be an extra
noise term $\bs{Ex}$.


The rest of this section establishes certain assumptions and
notation necessary for our analysis.
Section~\ref{sect:Perturbation_Analysis} first gives a brief review
of previous work on the partially perturbed scenario in CS, and then
presents our main theoretical and numerical results on the
completely perturbed scenario. Section~\ref{sect:Proofs} provides
proofs of the theorems, and
Section~\ref{sect:Classical_Pert_Analysis} compares the CS solution
with classical least squares. Concluding remarks are given in
Section~\ref{sect:Conclusion}, and a brief discussion on different
kinds of perturbation $\bs{E}$ which we often encounter can be found
in the Appendix.

\subsection{Assumptions and Notation}
Throughout this paper we represent vectors and matrices with
boldface type. \Wlog, assume that the original data~$\bs{x}$ is a
$K$-sparse vector for some fixed $K$, or that it is compressible.
Vectors which are \emph{$K$-sparse} contain no more than $K$ nonzero
elements, and \emph{compressible} vectors are ones whose ordered
coefficients decay according to a power law (i.e.,
$\abs{\bs{x}}_{\gr{k}}\le C_p \:\! k^{-p}$, where
$\abs{\bs{x}}_{\gr{k}}$ is the \kth\ largest element of~$\bs{x}$,
$p\ge1$, and~$C_p$ is a constant which depends only on~$p$). Let
vector $\bs{x}_K\in\C^n$ be the best $K$-term approximation to
$\bs{x}$, i.e., it contains the $K$ largest coefficients of~$\bs{x}$
with the rest set to zero. We occasionally refer to this vector as
the ``head'' of~$\bs{x}$. Note that if~$\bs{x}$ is $K$-sparse, then
$\bs{x}=\bs{x}_K$. With a slight abuse of notation denote
$\bs{x}_{K^c} = \bs{x}-\bs{x}_K$ as the ``tail'' of $\bs{x}$.

The symbols $\sigma_{\max}\gr{\bs{Y}}$, $\sigma_{\min}\gr{\bs{Y}}$,
and $\normtwo{\bs{Y}}$ respectively denote the usual maximum,
minimum nonzero singular values, and spectral norm of a matrix
$\bs{Y}$. Our analysis will require examination of
\emph{submatrices} consisting of an arbitrary collection of $K$
columns. We use the superscript~$\gr{K}$ to represent extremal
values of the above spectral measures. For instance,
$\sigma_{\max}^{\gr{K}}\gr{\bs{Y}}$ denotes the largest singular
value taken over all \emph{$K$-column submatrices} of~$\bs{Y}$.
Similar definitions apply to $\normtwo{\bs{Y}}^{\gr{K}}$ and
${\rank}^{\gr{K}}\gr{\bs{Y}}$, while
$\sigma_{\min}^{\gr{K}}\gr{\bs{Y}}$ is the \emph{smallest} nonzero
singular value over all $K$-column submatrices of~$\bs{Y}$. With
these, the perturbations $\bs{E}$ and $\bs{e}$
can be quantified with the following relative bounds\\
\begin{equation} \label{eq:Pert_relative}
\frac{\normtwo{\bs{E}}} {\normtwo{\bs{A}}} \,\le\, \eps_{\bs{A}},
\quad \frac{\normtwo{\bs{E}}^{\gr{K}}} {\normtwo{\bs{A}}^{\gr{K}}}
\,\le\, \eps_{\bs{A}}^{\gr{K}}, \quad
\frac{\normtwo{\bs{e}}}{\normtwo{\bs{b}}} \,\le\, \eps_{\bs{b}},\\
\end{equation}
where
$\normtwo{\bs{A}},\normtwo{\bs{A}}^{\gr{K}},\normtwo{\bs{b}}\neq0$.
In real-world applications we often do not know the exact nature of
$\bs{E}$ and $\bs{e}$ and instead are forced to estimate their
relative upper bounds. This is the point of view taken throughout
most of this treatise. In this study we are only interested in the
case where $\eps_{\bs{A}}, \eps_{\bs{A}}^{\gr{K}}, \eps_{\bs{b}} <
1$.\\

\section{CS $\ell_1$ Perturbation Analysis} \label{sect:Perturbation_Analysis}
\subsection{Previous Work}
In the \emph{partially perturbed scenario} (i.e.,~$\bs{E} = \bs{0}$)
we are concerned with solving the \emph{Basis Pursuit}~(BP)
problem~\cite{ChenDonSaun}:
\begin{equation} \label{eq:Candes_Pert_BP}
\bs{z^\star} = \argmin_{\bs{\hat{z}}} {\norm{\bs{\hat{z}}}_1} \;\;\,
\mathrm{s.t.} \;\;\, \normtwo{\bs{A\hat{z}} - \bs{\hat{b}}} \,\leq\,
\eps'
\end{equation}
for some $\eps'\ge0$.\footnote{Throughout this paper \emph{absolute}
errors are denoted with a prime. In contrast, \emph{relative}
perturbations, such as in~(\ref{eq:Pert_relative}), are not primed.}

The \emph{restricted isometry property} (RIP)
\cite{CanTao_DecLinProg} for any matrix~$\bs{A}\in\C^{m\times n}$
defines, for each integer $K=1,2,\ldots$, the \emph{restricted
isometry constant} (RIC) $\delta_K$, which is the smallest
nonnegative number such that
\begin{equation} \label{eq:RIP}
\gr{1-\delta_K}\normtwo{\bs{x}}^2 \:\le\: \normtwo{\bs{Ax}}^2
\:\le\: \gr{1+\delta_K}\normtwo{\bs{x}}^2
\end{equation}
holds for any $K$-sparse vector $\bs{x}$. In the context of the RIC,
we observe that $\normtwo{\bs{A}}^{\gr{K}} =
\sigma_{\max}^{\gr{K}}\gr{\bs{A}} \le \sqrt{1+\delta_K}$, and
$\sigma_{\min}^{\gr{K}}\gr{\bs{A}} \ge \sqrt{1-\delta_K}$.

Assuming $\delta_{2K} < \sqrt{2}-1$ and $\normtwo{\bs{e}}\le\eps'$,
Cand\`es has shown (\!\!\cite{CanRIP}, Thm.~1.2) that the solution
to~(\ref{eq:Candes_Pert_BP}) obeys
\begin{equation} \label{eq:Candes_result}
\normtwo{\bs{z^\star} - \bs{x}} \,\le\,
C_0\:\!K^{-1/2}\normone{\bs{x}-\bs{x}_K} + C_1\:\!\eps'
\end{equation}
for some constants $C_0, C_1\ge0$ which are reasonably well-behaved
and can be calculated explicitly.\\

\subsection{Incorporating nontrivial perturbation $\bs{E}$}
Now assume the \textbf{\emph{completely perturbed}} situation with
$\bs{E},\bs{e}\ne\bs{0}$. In this case the BP problem
of~(\ref{eq:Candes_Pert_BP}) can be generalized to include a
different decoding matrix~$\bs{\hat{A}}$:
\begin{equation} \label{eq:Complete_Pert_BP}
\bs{z^\star} = \argmin_{\bs{\hat{z}}} {\norm{\bs{\hat{z}}}_1} \;\;\,
\mathrm{s.t.} \;\;\, \normtwo{\bs{\hat{A}\hat{z}} - \bs{\hat{b}}}
\,\leq\, \eps'_{\bs{A},K,\bs{b}}
\end{equation}
for some $\eps'_{\bs{A},K,\bs{b}}\ge0$. The following two theorems
summarize our results.

\begin{thm}[RIP for $\bs{\hat{A}}$] \label{thm:RIP_perturbed}
Fix $K=1,2,\ldots$. Given the RIC~$\delta_K$ associated with
matrix~$\bs{A}$ in~(\ref{eq:RIP}) and the relative
perturbation~$\eps_{\bs{A}}^{\gr{K}}$ associated with (possibly
unknown) matrix~$\bs{E}$ in~(\ref{eq:Pert_relative}), fix the
constant
\begin{equation} \label{eq:pert_RIC_bound}
\hat{\delta}_{K\!,\:\!\max} \,:=\, \Gr{1+\delta_K}
\GR{1+\eps_{\bs{A}}^{\gr{K}}}^2- 1.
\end{equation}
Then the RIC $\hat{\delta}_K$ for matrix $\bs{\hat{A}} = \bs{A} +
\bs{E}$ is the smallest nonnegative number such that
\begin{equation} \label{eq:Pert_RIP}
\gr{1-\hat{\delta}_K}\normtwo{\bs{x}}^2 \:\le\:
\normtwo{\bs{\hat{A}x}}^2 \:\le\:
\gr{1+\hat{\delta}_K}\normtwo{\bs{x}}^2
\end{equation}
holds for any $K$-sparse vector $\bs{x}$ where $\hat{\delta}_K \le
\hat{\delta}_{K\!,\:\!\max}$.
\end{thm}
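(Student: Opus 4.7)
The plan is to reduce the RIP for $\bs{\hat{A}}$ to a column-wise triangle inequality, exploiting the fact that for any $K$-sparse $\bs{x}$, the product $\bs{\hat{A}x} = \bs{Ax} + \bs{Ex}$ only involves $K$ of the columns of $\bs{A}$ and $\bs{E}$. Thus the relevant spectral norms are the $K$-column submatrix norms $\normtwo{\bs{A}}^{\gr{K}}$ and $\normtwo{\bs{E}}^{\gr{K}}$, not the full spectral norms. The plan is: write $\bs{\hat{A}x}$ as a sum of the clean and perturbation contributions restricted to $\mathrm{supp}(\bs{x})$, bound each piece via the respective submatrix norm, then combine via triangle inequality and square to match the form of~(\ref{eq:Pert_RIP}).

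For the upper side, the triangle inequality gives
\[
\normtwo{\bs{\hat{A}x}} \;\le\; \Gr{\normtwo{\bs{A}}^{\gr{K}} + \normtwo{\bs{E}}^{\gr{K}}}\normtwo{\bs{x}}.
\]
I then substitute the bound $\normtwo{\bs{E}}^{\gr{K}} \le \eps_{\bs{A}}^{\gr{K}}\normtwo{\bs{A}}^{\gr{K}}$ from~(\ref{eq:Pert_relative}) and the observation (stated right after the RIP definition) that $\normtwo{\bs{A}}^{\gr{K}} = \sigma_{\max}^{\gr{K}}\gr{\bs{A}} \le \sqrt{1+\delta_K}$, obtaining $\normtwo{\bs{\hat{A}x}} \le \gr{1+\eps_{\bs{A}}^{\gr{K}}}\sqrt{1+\delta_K}\:\!\normtwo{\bs{x}}$. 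Squaring yields exactly $\normtwo{\bs{\hat{A}x}}^2 \le \gr{1+\delta_K}\gr{1+\eps_{\bs{A}}^{\gr{K}}}^2\normtwo{\bs{x}}^2 = \gr{1+\hat{\delta}_{K,\max}}\normtwo{\bs{x}}^2$, which is the right-hand inequality of~(\ref{eq:Pert_RIP}) with constant $\hat{\delta}_{K,\max}$.

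For the lower side I would use the reverse triangle inequality $\normtwo{\bs{\hat{A}x}} \ge \normtwo{\bs{Ax}} - \normtwo{\bs{Ex}}$ together with $\normtwo{\bs{Ax}} \ge \sqrt{1-\delta_K}\normtwo{\bs{x}}$ from~(\ref{eq:RIP}). When the resulting quantity $\sqrt{1-\delta_K} - \eps_{\bs{A}}^{\gr{K}}\sqrt{1+\delta_K}$ is non-negative, squaring and rearranging leaves me to verify $1 - \gr{\sqrt{1-\delta_K} - \eps_{\bs{A}}^{\gr{K}}\sqrt{1+\delta_K}}^2 \le \hat{\delta}_{K,\max}$, which follows from a short algebraic comparison using $\sqrt{1-\delta_K^2}\le 1+\delta_K$; when the quantity is non-positive, $1-\hat{\delta}_{K,\max}$ is likewise non-positive and the lower inequality becomes trivial. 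Taking $\hat{\delta}_K$ to be the smallest constant making both sides hold then gives $\hat{\delta}_K \le \hat{\delta}_{K,\max}$.

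There is no deep obstacle, but two points demand care. First, one must resist replacing $\normtwo{\bs{A}}^{\gr{K}}$ and $\normtwo{\bs{E}}^{\gr{K}}$ by the full spectral norms $\normtwo{\bs{A}}$, $\normtwo{\bs{E}}$; doing so would yield a much looser bound (and would force the use of $\eps_{\bs{A}}$ instead of the sharper $\eps_{\bs{A}}^{\gr{K}}$). Second, the single constant $\hat{\delta}_{K,\max}$ must be checked to dominate both the upper and the lower deviation; the upper side is the binding one, while the lower side is handled because the perturbation contribution $\eps_{\bs{A}}^{\gr{K}}\sqrt{1+\delta_K}$ appears with the \emph{same} sign factor on both sides and $\hat{\delta}_{K,\max}$ was defined by taking the worst case.
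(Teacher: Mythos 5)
Your proposal is correct and follows essentially the same route as the paper: bound $\normtwo{\bs{\hat{A}x}}$ above via the triangle inequality using the $K$-column submatrix norms together with $\normtwo{\bs{E}}^{\gr{K}}\le\eps_{\bs{A}}^{\gr{K}}\normtwo{\bs{A}}^{\gr{K}}\le\eps_{\bs{A}}^{\gr{K}}\sqrt{1+\delta_K}$, bound it below via the reverse triangle inequality, and then observe that the upper deviation dominates the lower one so that the single constant $\hat{\delta}_{K,\max}$ suffices. If anything, your treatment of the lower side (carrying the factor $\eps_{\bs{A}}^{\gr{K}}\sqrt{1+\delta_K}$ rather than $\eps_{\bs{A}}^{\gr{K}}\sqrt{1-\delta_K}$, checking the comparison via $\sqrt{1-\delta_K^2}\le 1+\delta_K$, and handling the degenerate case where the lower estimate goes negative) is slightly more careful than the paper's, which passes over these points with a ``similar arguments'' remark.
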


\begin{rem}
Properly interpreting Theorem~\ref{thm:Completely_perturbed_BP} is
important. It is assumed that the only information known about
matrix~$\bs{E}$ is its \emph{worst-case} relative
perturbation~$\eps_{\bs{A}}^{\gr{K}}$, and therefore the bound of
$\hat{\delta}_{K\!,\:\!\max}$ in~(\ref{eq:pert_RIC_bound})
represents a \emph{worst-case deviation} of~$\hat{\delta}_K$. Notice
for a given~$\eps_{\bs{A}}^{\gr{K}}$ that there are infinitely
many~$\bs{E}$ which satisfy it. In fact, it is possible to construct
nonzero perturbations which result in $\hat{\delta}_K = \delta_K$!
For example, suppose $\bs{\hat{A}} = \bs{AU}$ for some unitary
matrix $\bs{U}\ne\bs{I}$ where $\bs{I}$ is the identity matrix.
Clearly here $\bs{E} = \bs{A}\gr{\bs{U}-\bs{I}} \ne \bs{0}$ and yet
since $\bs{U}$ is unitary we have $\hat{\delta}_K = \delta_K$. In
this case  using~$\eps_{\bs{A}}^{\gr{K}}$ to
calculate~$\hat{\delta}_{K\!,\:\!\max}$ could be a gross upper bound
for~$\hat{\delta}_K$. If more information on~$\bs{E}$ is
known,\footnote{See the appendix for more discussion on the
different forms of perturbation~$\bs{E}$ which we are likely to
encounter.} then much tighter bounds on~$\hat{\delta}_K$ can be
determined.
\end{rem}

\begin{rem}
The flavor of the RIP is defined with respect to the square of the
operator norm. That is, $\gr{1-\delta_K}$ and $\gr{1+\delta_K}$ are
measures of the \textbf{\emph{square}} of the minimum and maximum
singular values of $K$-column submatrices of~$\bs{A}$, and similarly
for $\bs{\hat{A}}$. In keeping with the convention of classical
perturbation theory however, we defined $\eps_{\bs{A}}^{\gr{K}}$
in~(\ref{eq:Pert_relative}) just in terms of the operator norm (not
its square). Therefore, the quadratic dependence
of~$\hat{\delta}_{K\!,\:\!\max}$ on~$\eps_{\bs{A}}^{\gr{K}}$
in~(\ref{eq:pert_RIC_bound}) makes sense. Moreover, in discussing
the spectrum of $K$-column submatrices of~$\bs{\hat{A}}$, we see
that it is really a \emph{linear function}
of~$\eps_{\bs{A}}^{\gr{K}}$.
\end{rem}

Before introducing the next theorem let us define the following
constants due to matrix $\bs{A}$
\begin{equation} \label{eq:condition_numbers}
\kappa_{\bs{A}}^{\gr{K}} :=
\frac{\sqrt{1+\delta_K}}{\sqrt{1-\delta_K}}, \qquad \alpha_{\bs{A}}
:= \frac{\normtwo{\bs{A}}}{\sqrt{1-\delta_K}}.
\end{equation}
The first quantity bounds the ratio of the extremal singular values
of all $K$-column submatrices of~$\bs{A}$
$$\frac{\sigma_{\max}^{\gr{K}}\gr{\bs{A}}}
{\sigma_{\min}^{\gr{K}}\gr{\bs{A}}} \,\le\,
\kappa_{\bs{A}}^{\gr{K}}.$$
Actually, for very small~$\delta_K$ we
have $\kappa_{\bs{A}}^{\gr{K}}\approx1$, which implies that every
$K$-column submatrix forms an approximately orthonormal set.

Also introduce the ratios
\begin{equation} \label{eq:ratios}
r_K := \frac{\normtwo{\bs{x}_{K^c}}}{\normtwo{\bs{x}_K}}, \qquad s_K
:= \frac{\normone{\bs{x}_{K^c}}}{\normtwo{\bs{x}_K}}
\end{equation}
which quantify the weight of a signal's tail relative to its head.
When~$\bs{x}$ is $K$-sparse we have $\bs{x}_{K^c} = \bs{0}$, and so
$r_K = s_K = 0$. If $\bs{x}$ is compressible, then these values are
a function of the power $p$ (i.e., the rate at which the
coefficients decay), and the cardinality $K$ of the group of its
largest entries. For reasonable values of $p$ and $K$, we expect
that $r_K, s_K \ll 1$.

\begin{thm}[Stability from completely perturbed observation]
\label{thm:Completely_perturbed_BP} Fix the relative perturbations
$\eps_{\bs{A}}$, $\eps_{\bs{A}}^{\gr{K}}$, $\eps_{\bs{A}}^{\gr{2K}}$
and~$\eps_{\bs{b}}$ in~(\ref{eq:Pert_relative}). Assume the RIC for
matrix $\bs{A}$ satisfies$^{ }$\footnote{Note for $\delta_{2K} \ge
0$, (\ref{cond:Main_thm_constraint_1}) requires that
$\eps_{\bs{A}}^{(2K)}< \sqrt[4]{2}-1$.}
\begin{equation} \label{cond:Main_thm_constraint_1}
\delta_{2K} \;<\; \frac{\sqrt{2}}
{\GR{1+\eps_{\bs{A}}^{\gr{2K}}}^{2}} \,-\, 1,
\end{equation} and that general signal $\bs{x}$ satisfies
\begin{equation} \label{cond:Main_thm_constraint_2}
r_K + \frac{s_K}{\sqrt{K}} \;<\; \frac{1}{\kappa_{\bs{A}}^{\gr{K}}}.
\end{equation}
Set the total noise parameter
\begin{equation} \label{eq:BP_absolute_error_constraint}
\eps'_{\bs{A},K,\bs{b}} := \GRg{
\frac{\eps_{\bs{A}}^{\gr{K}}\kappa_{\bs{A}}^{\gr{K}} +\,
\eps_{\bs{A}}\alpha_{\bs{A}}r_K}
{1-\kappa_{\bs{A}}^{\gr{K}}\!\Gr{r_K + {s_K}/{\sqrt{K}}}} \,+\,
\eps_{\bs{b}}} \normtwo{\bs{b}}.
\end{equation}
Then the solution of the BP problem
(\ref{eq:Complete_Pert_BP}) obeys \\
\begin{equation} \label{eq:Pert_error_soln_CS_l1}
{\normtwo{\bs{z^\star} - \bs{x}}} \;\le\;
\frac{C_0}{\sqrt{K}}\:\!\normone{\bs{x}-\bs{x}_K} \,+\,
C_1\:\!\eps'_{\bs{A},K,\bs{b}}, \\
\end{equation}
where
\begin{equation} \label{eq:Pert_Constant0_BP}
C_0 =
\frac{2\AutoGroup{1\;\!+\;\!\Gr{\sqrt{2}-1}\AutoVect{\Gr{1+\delta_{2K}}
\AutoGroup{1+\eps_{\bs{A}}^{\gr{2K}}}^2 - 1}}}
{1\;\!-\;\!\Gr{\sqrt{2}+1}\AutoVect{\Gr{1+\delta_{2K}}\AutoGroup{1+\eps_{\bs{A}}^{\gr{2K}}}^2
- 1}},
\end{equation}
\begin{equation} \label{eq:Pert_Constant1_BP}
C_1 =
\frac{4\sqrt{1+\delta_{2K}}\AutoGroup{1+\eps_{\bs{A}}^{\gr{2K}}}}
{1\;\!-\;\!\Gr{\sqrt{2}+1}\AutoVect{\Gr{1+\delta_{2K}}\AutoGroup{1+\eps_{\bs{A}}^{\gr{2K}}}^2
- 1}}.
\end{equation}
\end{thm}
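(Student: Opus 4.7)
The strategy is to reduce Theorem \ref{thm:Completely_perturbed_BP} to Candès' Theorem 1.2 (i.e.\ estimate~(\ref{eq:Candes_result})) applied directly to the perturbed sensing matrix $\bs{\hat{A}}$. Two reductions are required: first, that the true signal $\bs{x}$ is feasible for the BP program~(\ref{eq:Complete_Pert_BP}), i.e.\ $\normtwo{\bs{\hat{A}x}-\bs{\hat{b}}}\le\eps'_{\bs{A},K,\bs{b}}$; and second, that the RIC $\hat{\delta}_{2K}$ of $\bs{\hat{A}}$ lies in the range $\hat{\delta}_{2K}<\sqrt{2}-1$ where Candès' bound applies. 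Given these, (\ref{eq:Pert_error_soln_CS_l1}) will follow almost immediately.

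For feasibility, note that $\bs{\hat{A}x}-\bs{\hat{b}}=\bs{Ex}-\bs{e}$, so the triangle inequality gives $\normtwo{\bs{\hat{A}x}-\bs{\hat{b}}}\le\normtwo{\bs{Ex}}+\normtwo{\bs{e}}$, and the additive-noise piece is trivially $\le \eps_{\bs{b}}\normtwo{\bs{b}}$. For $\normtwo{\bs{Ex}}$ I would split $\bs{x}=\bs{x}_K+\bs{x}_{K^c}$ and estimate each piece with a different operator-norm bound: $\normtwo{\bs{Ex}_K}\le\normtwo{\bs{E}}^{(K)}\normtwo{\bs{x}_K}\le \eps_{\bs{A}}^{(K)}\sqrt{1+\delta_K}\normtwo{\bs{x}_K}$, exploiting $K$-sparsity of $\bs{x}_K$ together with (\ref{eq:Pert_relative}) and (\ref{eq:RIP}); and $\normtwo{\bs{Ex}_{K^c}}\le\normtwo{\bs{E}}\normtwo{\bs{x}_{K^c}}\le\eps_{\bs{A}}\normtwo{\bs{A}}r_K\normtwo{\bs{x}_K}$ using the relative bound on the full matrix. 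To convert these into multiples of $\normtwo{\bs{b}}$ I would invoke $\normtwo{\bs{b}}=\normtwo{\bs{Ax}}\ge\normtwo{\bs{Ax}_K}-\normtwo{\bs{Ax}_{K^c}}$, lower-bound $\normtwo{\bs{Ax}_K}\ge\sqrt{1-\delta_K}\normtwo{\bs{x}_K}$ via the RIP, and apply the standard Candès--Tao chunk decomposition of $\bs{x}_{K^c}$ into consecutive $K$-sparse blocks of decreasing magnitude to obtain $\normtwo{\bs{Ax}_{K^c}}\le \sqrt{1+\delta_K}(\normtwo{\bs{x}_{K^c}}+\normone{\bs{x}_{K^c}}/\sqrt{K})=\sqrt{1+\delta_K}\normtwo{\bs{x}_K}(r_K+s_K/\sqrt{K})$. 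Rearranging and using hypothesis~(\ref{cond:Main_thm_constraint_2}) to guarantee positivity of the denominator yields $\normtwo{\bs{x}_K}\le\normtwo{\bs{b}}/\Gr{\sqrt{1-\delta_K}[1-\kappa_{\bs{A}}^{(K)}(r_K+s_K/\sqrt{K})]}$, and substituting this back through the two $\bs{Ex}$-estimates gives exactly the noise budget $\eps'_{\bs{A},K,\bs{b}}$ of (\ref{eq:BP_absolute_error_constraint}).

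For the RIP reduction, Theorem \ref{thm:RIP_perturbed} gives $\hat{\delta}_{2K}\le\hat{\delta}_{2K,\max}=(1+\delta_{2K})(1+\eps_{\bs{A}}^{(2K)})^2-1$, and hypothesis~(\ref{cond:Main_thm_constraint_1}) is precisely the statement $\hat{\delta}_{2K,\max}<\sqrt{2}-1$, placing $\bs{\hat{A}}$ in the Candès regime. Applying~(\ref{eq:Candes_result}) to $\bs{\hat{A}}$ with noise level $\eps'_{\bs{A},K,\bs{b}}$ then gives the desired estimate~(\ref{eq:Pert_error_soln_CS_l1}), with constants expressed in terms of $\hat{\delta}_{2K}$. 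Since the Candès constants $C_0$ and $C_1$ are each monotonically increasing in $\hat{\delta}_{2K}$ on the admissible interval, I can replace $\hat{\delta}_{2K}$ by its worst-case upper bound $\hat{\delta}_{2K,\max}$ without invalidating the inequality; simplifying $\sqrt{1+\hat{\delta}_{2K,\max}}=\sqrt{1+\delta_{2K}}(1+\eps_{\bs{A}}^{(2K)})$ then produces the explicit forms (\ref{eq:Pert_Constant0_BP}) and (\ref{eq:Pert_Constant1_BP}).

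The main obstacle I expect is the bookkeeping in the feasibility step: bounding $\normtwo{\bs{Ex}_K}$, $\normtwo{\bs{Ex}_{K^c}}$ and $\normtwo{\bs{Ax}_{K^c}}$ by qualitatively different arguments (sparse-submatrix RIP, full operator norm, and the chunk decomposition respectively), and then aligning them so as to isolate $\normtwo{\bs{x}_K}$ on one side, is what produces the somewhat intricate form of $\eps'_{\bs{A},K,\bs{b}}$. Everything else is essentially algebraic manipulation or a direct invocation of prior results (Theorem~\ref{thm:RIP_perturbed} and Candès' Theorem 1.2).
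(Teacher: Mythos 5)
Your proposal is correct and follows essentially the same route as the paper: the feasibility step you sketch (splitting $\bs{Ex}$ into head and tail, bounding $\normtwo{\bs{Ax}_{K^c}}$ via the chunk decomposition, and isolating $\normtwo{\bs{x}_K}$ using condition~(\ref{cond:Main_thm_constraint_2})) is exactly the content of the paper's Proposition~\ref{prop:OneSketchForAll} and Lemmas~\ref{lem:Bound_Image_Tail} and~\ref{lem:Total_observed_perturbation_bound}, and the remainder is Cand\`es' argument run with $\bs{\hat{A}}$ under the perturbed RIC bound of Theorem~\ref{thm:RIP_perturbed}. Your explicit monotonicity check for $C_0$ and $C_1$ in $\hat{\delta}_{2K}$ is a small point the paper leaves implicit when substituting $\hat{\delta}_{2K,\max}$.
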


\begin{rem}
Theorem~\ref{thm:Completely_perturbed_BP} generalizes Cand\`es'
results in~\cite{CanRIP}. Indeed, if matrix $\bs{A}$ is unperturbed,
then $\bs{E} = \bs{0}$ and $\eps_{\bs{A}} = \eps_{\bs{A}}^{\gr{K}} =
0$. It follows that $\hat{\delta}_{K} = \delta_{K}$
in~(\ref{eq:pert_RIC_bound}), and the RIPs for $\bs{A}$ and
$\bs{\hat{A}}$ coincide. Moreover,
assumption~(\ref{cond:Main_thm_constraint_1}) in
Theorem~\ref{thm:Completely_perturbed_BP} reduces to
$\delta_{K}<\sqrt{2}-1$, and the total perturbation
(see~(\ref{eq:Total_pert})) collapses to $\normtwo{\bs{e}} \le
\eps'_{\bs{b}} := \eps_{\bs{b}}\normtwo{\bs{b}}$ (so that
assumption~(\ref{cond:Main_thm_constraint_2}) is no longer
necessary); both of these are identical to Cand\`es' assumptions in
(\ref{eq:Candes_result}). Finally, the constants~$C_0, C_1$
in~(\ref{eq:Pert_Constant0_BP}) and~(\ref{eq:Pert_Constant1_BP})
reduce to the same as outlined in the proof of~\cite{CanRIP}.
\end{rem}

The assumption in~(\ref{cond:Main_thm_constraint_2}) demands more
discussion. Observe that the left-hand side (LHS) is solely a
function of the signal~$\bs{x}$, while the right-hand side (RHS) is
just a function of the matrix~$\bs{A}$. For reasonably compressible
signals, it is often the case that the LHS is on the order of
$10^{-2}$ or $10^{-3}$. At the same time, the RHS is always of order
$10^0$ due to assumption~(\ref{cond:Main_thm_constraint_1}).
Therefore, there should be a sufficient gap to ensure that
assumption~(\ref{cond:Main_thm_constraint_2}) holds. Clearly this
condition is automatically satisfied whenever~$\bs{x}$ is strictly
$K$-sparse.

In fact, more can be said about
Theorem~\ref{thm:Completely_perturbed_BP} for the case of a
$K$-sparse input. Notice then that the terms related
to~$\bs{x}_{K^c}$ in~(\ref{eq:BP_absolute_error_constraint})
and~(\ref{eq:Pert_error_soln_CS_l1}) disappear, and the accuracy of
the solution becomes
\begin{equation*}
{\normtwo{\bs{z^\star} - \bs{x}}} \;\le\; C_1 \AutoGroup{
\kappa_{\bs{A}}^{\gr{K}}\;\!\eps_{\bs{A}}^{\gr{K}} \,+\,
\eps_{\bs{b}}} \normtwo{\bs{b}}.
\end{equation*}
This form of the stability of the BP solution is helpful since it
highlights the effect of the perturbation~$\bs{E}$ on the~$K$ most
important elements of~$\bs{x}$, as well as the influence of the
additive noise $\bs{e}$. Clearly in the absence of any perturbation,
a $K$-sparse signal can be perfectly recovered by BP.

It is also interesting to examine the spectral effects due to the
first assumption of Theorem~\ref{thm:Completely_perturbed_BP}.
Namely, we want to be assured that the maximum rank of submatrices
of $\bs{A}$ is unaltered by the perturbation~$\bs{E}$.
\begin{lem} \label{lem:Spectral_consequence}
Assume condition (\ref{cond:Main_thm_constraint_1}) of
Theorem~\ref{thm:Completely_perturbed_BP} holds. Then for any
$k\le2K$
\begin{equation} \label{eq:Spectral_consequence}
\sigma_{\max}^{\gr{k}}\gr{\bs{E}} \;<\;
\sigma_{\min}^{\gr{k}}\gr{\bs{A}},
\end{equation}
and therefore
$${\rank}^{\gr{k}}\gr{\bs{\hat{A}}}
\;=\; {\rank}^{\gr{k}}\gr{\bs{A}}.$$
\end{lem}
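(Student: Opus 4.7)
The plan is to convert condition~(\ref{cond:Main_thm_constraint_1}) into a clean scalar inequality, then use the RIP at level~$2K$ together with column-subset monotonicity of singular values to upper-bound $\sigma_{\max}^{\gr{k}}\gr{\bs{E}}$ and lower-bound $\sigma_{\min}^{\gr{k}}\gr{\bs{A}}$ for every $k\le 2K$. I would first rewrite~(\ref{cond:Main_thm_constraint_1}) as $\gr{1+\delta_{2K}}\GR{1+\eps_{\bs{A}}^{\gr{2K}}}^{2}<\sqrt{2}$ and then use the elementary inequality $\GR{1+\eps_{\bs{A}}^{\gr{2K}}}^{2}\ge 1+\GR{\eps_{\bs{A}}^{\gr{2K}}}^{2}$ (valid since $\eps_{\bs{A}}^{\gr{2K}}\ge 0$). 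This produces
\[
\GR{\eps_{\bs{A}}^{\gr{2K}}}^{2}\gr{1+\delta_{2K}} \,<\, \sqrt{2}-1-\delta_{2K} \,<\, 1-\delta_{2K},
\]
i.e.\ $\eps_{\bs{A}}^{\gr{2K}}\sqrt{1+\delta_{2K}}<\sqrt{1-\delta_{2K}}$; this is the single scalar inequality that drives the whole argument.

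Next, I would fix any $k\le 2K$ and any index set $S$ with $\abs{S}=k$. Extending $S$ to a superset $S'$ of size~$2K$, the fact that restriction to a column-submatrix cannot increase the spectral norm gives $\sigma_{\max}\gr{\bs{E}_S}\le\normtwo{\bs{E}_{S'}}\le\normtwo{\bs{E}}^{\gr{2K}}$. Combining with~(\ref{eq:Pert_relative}) and the RIP bound $\normtwo{\bs{A}}^{\gr{2K}}\le\sqrt{1+\delta_{2K}}$ yields $\sigma_{\max}^{\gr{k}}\gr{\bs{E}}\le\eps_{\bs{A}}^{\gr{2K}}\sqrt{1+\delta_{2K}}$. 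On the other side, the RIP at level~$k$ (with $\delta_k\le\delta_{2K}$, since $\delta_k$ is nondecreasing in~$k$) directly gives $\sigma_{\min}^{\gr{k}}\gr{\bs{A}}\ge\sqrt{1-\delta_k}\ge\sqrt{1-\delta_{2K}}$. The scalar inequality above then closes the gap and establishes~(\ref{eq:Spectral_consequence}).

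For the rank equality, I would invoke Weyl's inequality column-wise: for any $S$ with $\abs{S}=k\le 2K$, $\sigma_{\min}\gr{\bs{\hat{A}}_S}\ge\sigma_{\min}\gr{\bs{A}_S}-\sigma_{\max}\gr{\bs{E}_S}>0$ by the spectral bound just proved, so $\bs{\hat{A}}_S$ has full column rank~$k$; since $\bs{A}_S$ is full column rank by the RIP as well, $\rank^{\gr{k}}\gr{\bs{\hat{A}}}=k=\rank^{\gr{k}}\gr{\bs{A}}$. No step presents a genuine obstacle: the only mildly delicate move is the elementary bound $(1+\eps)^2\ge 1+\eps^2$, which is precisely the trick that peels the quadratic RIC-flavored assumption into a form directly comparable with the lower bound~$1-\delta_{2K}$; everything else is bookkeeping of RIC monotonicity and standard interlacing for column submatrices.
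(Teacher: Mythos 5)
Your proof is correct and follows essentially the same route as the paper's: both reduce condition~(\ref{cond:Main_thm_constraint_1}) to the scalar bound $\normtwo{\bs{E}}^{\gr{2K}} < \sqrt{1-\delta_{2K}} \le \sigma_{\min}^{\gr{2K}}\gr{\bs{A}}$ and then pass to general $k\le 2K$ by submatrix monotonicity of the extremal singular values, with the rank equality as an immediate consequence. The only difference is cosmetic: the paper routes the algebra through the intermediate quantity $\sqrt[4]{2}-\sqrt{1+\delta_{2K}}$ (implicitly using $\sqrt{1+t}+\sqrt{1-t}\ge\sqrt{2}>\sqrt[4]{2}$), whereas you use $\GR{1+\eps_{\bs{A}}^{\gr{2K}}}^{2}\ge 1+\GR{\eps_{\bs{A}}^{\gr{2K}}}^{2}$ to arrive at the same scalar inequality slightly more directly.
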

\noindent We apply this fact in the least squares analysis of
Section~\ref{sect:Classical_Pert_Analysis}.\\[-5pt]

The utility of Theorems~\ref{thm:RIP_perturbed}
and~\ref{thm:Completely_perturbed_BP} can be understood with two
simple numerical examples. Suppose that matrix~$\bs{A}$
in~(\ref{eq:Pert_observation}) represents a system that a signal
passes through which in reality has an RIC of $\delta_{2K} = 0.100$.
Assume however, that when modeling this system we introduce a
worst-case relative error of $\eps_{\bs{A}}^{\gr{2K}} = 5\%$ so that
we think that the system behaves as $\bs{\hat{A}} = \bs{A}+\bs{E}$.
From~(\ref{eq:pert_RIC_bound}) we can verify that
matrix~$\bs{\hat{A}}$ has an RIC $\hat{\delta}_{2K\!,\:\!\max} =
0.213$ which satisfies~(\ref{cond:Main_thm_constraint_1}). Thus,
if~(\ref{cond:Main_thm_constraint_2}) is also satisfied, then
Theorem~\ref{thm:Completely_perturbed_BP} guarantees that the BP
solution will have accuracy given
in~(\ref{eq:Pert_error_soln_CS_l1}) with $C_0 = 4.47$ and $C_1 =
9.06$. Note from~(\ref{eq:Pert_Constant0_BP})
and~(\ref{eq:Pert_Constant1_BP}) we see that if there had been no
perturbation, then $C_0 = 2.75$ and $C_1 = 5.53$.

Consider now a different example. Suppose instead that
$\delta_{2K}=0.200$ with $\eps_{\bs{A}}^{\gr{2K}} = 1\%$. Then
$\hat{\delta}_{2K\!,\:\!\max} = 0.224$, $C_0 = 4.76$ and $C_1 =
9.64$. Here, if $\bs{A}$ was unperturbed, then we would have had
$C_0 = 4.19$ and $C_1 = 8.47$.

These numerical examples show how the stability constants $C_0$ and
$C_1$ of the BP solution get worse with perturbations to~$\bs{A}$.
It must be stressed however, that they represent worst-case
instances. It is well-known in the CS community that better
performance is normally achieved in practice.

\subsection{Numerical Simulations}
Numerical simulations were conducted in {\sc Matlab} as follows. In
each trial a new matrix~$\bs{A}$ of size $128\times512$ was randomly
generated with normally distributed entries
$\mathcal{N}\gr{0,\sigma^2}$ where $\sigma^2 = 1/128$ (so that the
expected $\ell_2$-norm of each column was unity), and the spectral
norm of~$\bs{A}$ was calculated. Next, for each relative
perturbation $\eps_{\bs{A}}= 0, 0.01, 0.05, 0.1$ a different
perturbation matrix $\bs{E}$ with normally distributed entries was
generated, and then scaled so that $\normtwo{\bs{E}} =
\eps_{\bs{A}}\cdot\normtwo{\bs{A}}$.\footnote{We used
$\eps_{\bs{A}}$ in these simulations since calculating
$\eps_{\bs{A}}^{\gr{K}}$ explicitly is extremely difficult. Notice
that $\eps_{\bs{A}}\approx\eps_{\bs{A}}^{\gr{K}}$ for all $K$ with
high probability since both $\bs{A}, \bs{E}$ are random Gaussian
matrices.} A random vector $\bs{x}$ of sparsity $K=1,\ldots,64$ was
then randomly generated with nonzero entries uniformly distributed
$\mathcal{N}\gr{0,1}$, and $\bs{\hat{b}} = \bs{Ax}$
in~(\ref{eq:Pert_observation}) was created (note, we set
$\bs{e}=\bs{0}$ so as to focus on the effect of perturbation
$\bs{E}$). Finally, given $\bs{\hat{b}}$ and the $\bs{\hat{A}} =
\bs{A}+\bs{E}$ associated with each $\eps_{\bs{A}}$, the BP
program~(\ref{eq:Complete_Pert_BP}) was implemented with {\texttt
cvx} software~\cite{Boyd_cvx} and the relative error
$\normtwo{\bs{z^\star} - \bs{x}}/\normtwo{\bs{x}}$ was recorded. One
hundred trials were performed for each value of~$K$.

Figure~\ref{fig:MatrixPerturb_K64_m128_n512_aver100} shows the
relative error averaged over the $100$ trials as a function of~$K$
for each~$\eps_{\bs{A}}$. As a reference, the ideal, noise-free case
can be seen for $\eps_{\bs{A}}=0$. Now fix a particular value of
$K\le30$ and compare the relative error for the three nonzero values
of~$\eps_{\bs{A}}$. It is clear that the error scales roughly
linearly with~$\eps_{\bs{A}}$. For example, when $K=10$ the relative
errors corresponding to $\eps_{\bs{A}} = 0.01, 0.05, 0.1$
respectively are $9.7\times10^{-3}, 4.9\times10^{-2},
9.7\times10^{-2}$. We see here that the relative errors for
$\eps_{\bs{A}} = 0.05$ and $0.1$ are approximately five and ten
times the the relative error associated with $\eps_{\bs{A}} = 0.01$.
Therefore, this empirical study essentially confirms the conclusion
of Theorem~\ref{thm:Completely_perturbed_BP}: the stability of the
BP
solution scales linearly with $\eps_{\bs{A}}^{\gr{K}}$.\\
\begin{figure} [!t]
\begin{center}
\includegraphics[scale=0.55, trim=0 0 0 0mm, clip]{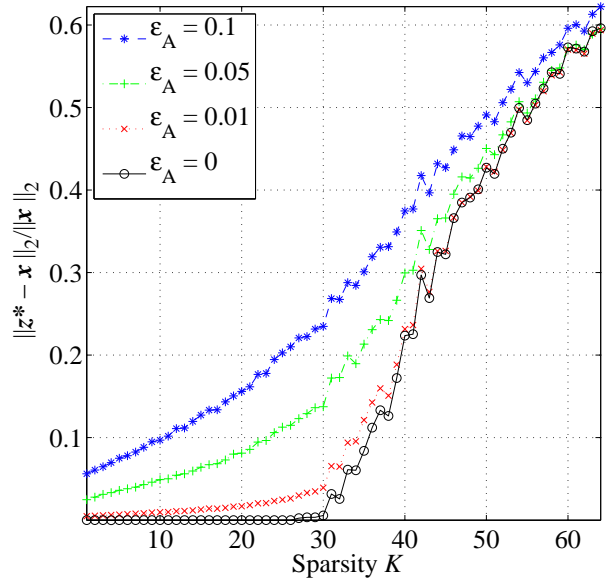}
\caption{Average (100 trials) relative error of BP solution
$\bs{z^\star}$ with respect to $K$-sparse $\bs{x}$ vs. Sparsity $K$
for different relative perturbations~$\eps_{\bs{A}}$ of $\bs{A}$.
Here $\bs{A}, \bs{E}$ are both $128\times512$ random matrices with
i.i.d. Gaussian entries and $\eps_{\bs{b}}=0$.
\label{fig:MatrixPerturb_K64_m128_n512_aver100}}
\end{center}
\end{figure}

Note that improved performance in theory and in simulation can be
achieved if BP is used solely to determine the support of the
solution. Then we can use least squares to better approximate the
coefficients on this support. This is similar to the the best-case,
oracle least squares solution discussed in
Section~\ref{sect:Classical_Pert_Analysis}. However, this method of
recovery was not pursued in the present analysis.

\section{Proofs} \label{sect:Proofs}
\subsection{Proof of Theorem~\ref{thm:RIP_perturbed}} \label{sect:Perturbed_RIP}
Recall that we are tasked with determining the
maximum~$\hat{\delta}_K$ given~$\delta_K$
and~$\eps_{\bs{A}}^{\gr{K}}$. Temporarily define~$l_K$ and~$u_K$ as
the \emph{smallest nonnegative numbers} such that
\begin{equation} \label{eq:Pert_RIP_Temporary}
\gr{1-l_K}\normtwo{\bs{x}}^2 \,\le\, \normtwo{\bs{\hat{A}x}}^2
\,\le\, \gr{1+u_K}\normtwo{\bs{x}}^2
\end{equation}
holds for any $K$-sparse vector $\bs{x}$. From the triangle
inequality,~(\ref{eq:RIP}) and (\ref{eq:Pert_relative}) we have
\begin{eqnarray}
\normtwo{\bs{\hat{A}x}}^2
&\le& \Gr{\normtwo{\bs{Ax}} \,+\, \normtwo{\bs{Ex}}}^2 \label{eq:Perturbed_RIP_Temp1}\\
&\le& \GR{\sqrt{1 + \delta_K} \;\!+\;\!
\normtwo{\bs{E}}^{\gr{K}}}^2 \normtwo{\bs{x}}^2 \label{eq:Perturbed_RIP_Temp2} \\
&\le& \gr{1 + \delta_K}\GR{1 \;\!+\;\! \eps_{\bs{A}}^{\gr{K}}}^2
\normtwo{\bs{x}}^2. \label{eq:Perturbed_RIP_Temp3}
\end{eqnarray}
In comparing the RHS of~(\ref{eq:Pert_RIP_Temporary}) and
(\ref{eq:Perturbed_RIP_Temp3}), it must be that
\begin{equation*}
\gr{1+u_K} \:\le\: \gr{1 + \delta_K}\GR{1 \;\!+\;\!
\eps_{\bs{A}}^{\gr{K}}}^2
\end{equation*}
as demanded by the definition of the $u_K$. Moreover, this
inequality is sharp for the following reasons:
\begin{itemize}
\item Equality occurs in~(\ref{eq:Perturbed_RIP_Temp1}) whenever $\bs{E}$ is a
positive, real-valued multiple of $\bs{A}$.
\item The inequality in~(\ref{eq:Perturbed_RIP_Temp2}) inherits the
sharpness of the upper bound of the RIP for matrix $\bs{A}$ in
(\ref{eq:RIP}).
\item Equality occurs in~(\ref{eq:Perturbed_RIP_Temp3}) since, in this hypothetical
case, we assume that $\bs{E} = \beta\bs{A}$ for some $0<\beta<1$.
Therefore, the relative perturbation~$\eps_{\bs{A}}^{\gr{K}}$
in~(\ref{eq:Pert_relative}) no longer represents a worst-case
deviation (i.e., the ratio $\frac{\normtwo{\bs{E}}^{\gr{K}}}
{\normtwo{\bs{A}}^{\gr{K}}} = \beta =: \eps_{\bs{A}}^{\gr{K}}$).
\end{itemize}
Since the triangle inequality constitutes a \emph{least-upper
bound}, and since we attain this bound, then
\begin{equation*}
u_K \::=\: \gr{1 + \delta_K}\GR{1 \;\!+\;\!
\eps_{\bs{A}}^{\gr{K}}}^2 -1
\end{equation*}
satisfies the definition of $u_K$.

Now the LHS of~(\ref{eq:Pert_RIP_Temporary}) is obtained in much the
same way using the ``reverse'' triangle inequality with similar
arguments (in particular, assume $-1<\beta<0$ and
$\eps_{\bs{A}}^{\gr{K}} := \abs{\beta}$). Thus
\begin{equation*}
l_K \::=\: 1 - \gr{1-\delta_K}\GR{1 \;\!-\;\!
\eps_{\bs{A}}^{\gr{K}}}^2.
\end{equation*}
Next, we need to make the bounds of~(\ref{eq:Pert_RIP_Temporary})
symmetric. Notice that $\gr{1-u_K} \,\le\, \gr{1-l_K}$ and
$\gr{1+l_K} \,\le\, \gr{1+u_K}$. Therefore, given $\delta_K$ and
$\eps_{\bs{A}}^{\gr{K}}$, we choose
$$\hat{\delta}_{K\!,\:\!\max} \,:=\, u_K$$
as the smallest nonnegative constant which
makes~(\ref{eq:Pert_RIP_Temporary}) symmetric. Finally, it is clear
that the actual RIC~$\hat{\delta}_K$ 
for~$\bs{\hat{A}}$ obeys $\hat{\delta}_K \le
\hat{\delta}_{K\!,\:\!\max}$. Hence,~(\ref{eq:Pert_RIP}) follows
immediately. \hfill $\blacksquare$

\subsection{Bounding the perturbed observation} \label{sect:Bounding_Perturbed_Observation}
Before proceeding to the proof of
Theorem~\ref{thm:Completely_perturbed_BP} we need several important
facts. First we generalize a lemma in~\cite{OneSketchForAll} about
the image of an arbitrary signal.

\begin{prop}[\!\!\cite{OneSketchForAll}, Lemma~29] \label{prop:OneSketchForAll}
Assume that matrix $\bs{A}$ satisfies the upper bound of the RIP
in~(\ref{eq:RIP}). Then for every signal $\bs{x}$ we have
$$\normtwo{\bs{Ax}} \le \sqrt{1+\delta_K}\GR{\normtwo{\bs{x}} +
\frac{1}{\sqrt{K}}\normone{\bs{x}}}.$$
\end{prop}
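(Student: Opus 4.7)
The plan is to reduce the general case to the $K$-sparse RIP bound via a dyadic/block decomposition of $\bs{x}$ and a telescoping inequality. First, I would sort the entries of $\bs{x}$ by magnitude and partition the index set into consecutive blocks $T_0, T_1, T_2, \ldots$ of size $K$ each, so that $T_0$ indexes the $K$ largest entries (in magnitude), $T_1$ the next $K$ largest, and so on. Writing $\bs{x}_{T_j}$ for the restriction of $\bs{x}$ to $T_j$, we have $\bs{x} = \sum_j \bs{x}_{T_j}$, where each $\bs{x}_{T_j}$ is $K$-sparse.

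Next, by the triangle inequality and the upper RIP bound applied to each $K$-sparse piece,
\begin{equation*}
\normtwo{\bs{Ax}} \;\le\; \sum_j \normtwo{\bs{A}\bs{x}_{T_j}} \;\le\; \sqrt{1+\delta_K}\;\!\sum_j \normtwo{\bs{x}_{T_j}}.
\end{equation*}
The first piece is controlled trivially: $\normtwo{\bs{x}_{T_0}} \le \normtwo{\bs{x}}$. The core step is bounding the tail blocks. Since the coordinates were sorted, every entry of $\bs{x}_{T_j}$ (for $j\ge1$) is at most the average magnitude of the entries of $\bs{x}_{T_{j-1}}$; that is, $\norminfty{\bs{x}_{T_j}} \le \normone{\bs{x}_{T_{j-1}}}/K$. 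Consequently
\begin{equation*}
\normtwo{\bs{x}_{T_j}} \;\le\; \sqrt{K}\;\!\norminfty{\bs{x}_{T_j}} \;\le\; \frac{1}{\sqrt{K}}\;\!\normone{\bs{x}_{T_{j-1}}}.
\end{equation*}

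Summing over $j\ge1$ telescopes: $\sum_{j\ge 1} \normtwo{\bs{x}_{T_j}} \le \frac{1}{\sqrt{K}}\sum_{j\ge 0}\normone{\bs{x}_{T_j}} = \frac{1}{\sqrt{K}}\normone{\bs{x}}$. Combining with the $j=0$ term gives
\begin{equation*}
\sum_j \normtwo{\bs{x}_{T_j}} \;\le\; \normtwo{\bs{x}} \,+\, \frac{1}{\sqrt{K}}\normone{\bs{x}},
\end{equation*}
which, substituted back, yields the claim. The only step requiring care is the averaging argument that controls $\norminfty{\bs{x}_{T_j}}$ by $\normone{\bs{x}_{T_{j-1}}}/K$; everything else is direct application of triangle inequality, RIP, and the elementary bound $\normtwo{\bs{y}}\le\sqrt{|\mathrm{supp}(\bs{y})|}\;\!\norminfty{\bs{y}}$.
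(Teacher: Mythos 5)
Your proof is correct: the block decomposition into sorted $K$-sparse pieces, the bound $\norminfty{\bs{x}_{T_j}} \le \normone{\bs{x}_{T_{j-1}}}/K$, and the telescoping sum are all valid, and they yield exactly the stated inequality. The paper itself gives no proof of this proposition (it imports it as Lemma~29 of the cited reference), and your argument is essentially the standard proof given there, so nothing further is needed.
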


Now we can establish sufficient conditions for the lower bound in
terms of the head and tail of $\bs{x}$ and the RIC of~$\bs{A}$.
\begin{lem} \label{lem:Bound_Image_Tail}
Assume condition~(\ref{cond:Main_thm_constraint_2}) in
Theorem~\ref{thm:Completely_perturbed_BP}. Then for general signal
$\bs{x}$, its image under $\bs{A}$ can be bounded below by the
positive quantity
$$\normtwo{\bs{Ax}} \ge \sqrt{1-\delta_K}\GR{\normtwo{\bs{x}_K}
\,-\, \kappa_{\bs{A}}^{\gr{K}}\GR{\normtwo{\bs{x}_{K^c}} +
\frac{\normone{\bs{x}_{K^c}}}{\sqrt{K}}}}.$$
\end{lem}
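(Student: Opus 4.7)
The plan is to decompose $\bs{x} = \bs{x}_K + \bs{x}_{K^c}$ into its head and tail, apply the reverse triangle inequality $\normtwo{\bs{Ax}} \ge \normtwo{\bs{Ax}_K} - \normtwo{\bs{Ax}_{K^c}}$, and then bound each of these pieces by a different tool: the standard RIP lower bound for the $K$-sparse head, and Proposition~\ref{prop:OneSketchForAll} for the (possibly non-sparse) tail.

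First, because $\bs{x}_K$ has at most $K$ nonzero entries, the lower inequality of (\ref{eq:RIP}) gives $\normtwo{\bs{Ax}_K} \ge \sqrt{1-\delta_K}\,\normtwo{\bs{x}_K}$. Second, applying Proposition~\ref{prop:OneSketchForAll} to the tail vector $\bs{x}_{K^c}$ (which is permitted since the proposition holds for every signal) yields
$$\normtwo{\bs{Ax}_{K^c}} \;\le\; \sqrt{1+\delta_K}\GR{\normtwo{\bs{x}_{K^c}} + \frac{1}{\sqrt{K}}\normone{\bs{x}_{K^c}}}.$$
Subtracting, factoring $\sqrt{1-\delta_K}$ out front, and recognizing $\kappa_{\bs{A}}^{\gr{K}} = \sqrt{1+\delta_K}/\sqrt{1-\delta_K}$ from (\ref{eq:condition_numbers}) delivers exactly the claimed bound.

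The only remaining issue is to justify that the bound is meaningful, i.e.\ that the quantity in large parentheses is nonnegative so that the inequality $\normtwo{\bs{Ax}} \ge (\cdot)$ is not vacuously implied by $\normtwo{\bs{Ax}} \ge 0$. Dividing through by $\normtwo{\bs{x}_K}$ and using the definitions (\ref{eq:ratios}) of $r_K$ and $s_K$, positivity is equivalent to
$$1 \,-\, \kappa_{\bs{A}}^{\gr{K}}\GR{r_K + \frac{s_K}{\sqrt{K}}} \;>\; 0,$$
which is precisely hypothesis (\ref{cond:Main_thm_constraint_2}) of Theorem~\ref{thm:Completely_perturbed_BP}. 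No step here looks like a genuine obstacle; the argument is essentially the reverse-triangle splitting plus one invocation of each of the two RIP-type bounds, with (\ref{cond:Main_thm_constraint_2}) calibrated exactly to keep the resulting lower bound strictly positive.
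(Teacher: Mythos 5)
Your proof is correct and follows essentially the same route as the paper: split $\bs{x}$ into head and tail, apply the reverse triangle inequality, bound $\normtwo{\bs{Ax}_K}$ below via the RIP and $\normtwo{\bs{Ax}_{K^c}}$ above via Proposition~\ref{prop:OneSketchForAll}, then invoke condition~(\ref{cond:Main_thm_constraint_2}) for strict positivity. Nothing is missing.
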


\begin{proof}
Apply Proposition~\ref{prop:OneSketchForAll} to the tail of
$\bs{x}$. Then
\begin{eqnarray}\normtwo{\bs{Ax}} &\!\!\!\ge&
\!\!\!\normtwo{\bs{A\:\!x}_K}-\normtwo{\bs{A\:\!x}_{K^c}} \nonumber\\
&\!\!\!\ge& \!\!\!\!\sqrt{1-\delta_K}\normtwo{\bs{x}_K} \:\!\!-
\!\sqrt{1+\delta_K}\GR{\normtwo{\bs{x}_{K^c}} +
\frac{\normone{\bs{x}_{K^c}}}{\sqrt{K}}} \nonumber\\
&\!\!\!=& \!\!\!\sqrt{1-\delta_K}\GR{1 -
\kappa_{\bs{A}}^{\gr{K}}\GR{r_K +
\frac{s_K}{\sqrt{K}}}}\normtwo{\bs{x}_K} \nonumber\\
&\!\!\!>& \!\!\!0 \nonumber
\end{eqnarray}
on account of~(\ref{cond:Main_thm_constraint_2}).
\end{proof}

\vspace{10pt} We still need some sense of the size of the total
perturbation incurred by $\bs{E}$ and $\bs{e}$. We do not know
\emph{a priori} the exact values of $\bs{E}$, $\bs{x}$, or $\bs{e}$.
But we can find an upper bound in terms of the relative
perturbations in~(\ref{eq:Pert_relative}). The main goal in the
following lemma is to remove the total perturbation's dependence on
the input $\bs{x}$.

\begin{lem}[Total perturbation bound] \label{lem:Total_observed_perturbation_bound}
Assume condition~(\ref{cond:Main_thm_constraint_2}) in
Theorem~\ref{thm:Completely_perturbed_BP} and set$\,$\footnote{Note
that the results in this paper can easily be expressed in terms of
the perturbed observation by replacing $\normtwo{\bs{b}} \le
{\normtwo{\bs{\hat{b}}}}\gr{1 - \eps_{\bs{b}}}^{-1}.$ This can be
useful in practice since one normally only has access
to~$\bs{\hat{b}}$.}
$$\eps'_{\bs{A},K,\bs{b}} := \GRg{
\frac{\eps_{\bs{A}}^{\gr{K}}\kappa_{\bs{A}}^{\gr{K}} +\,
\eps_{\bs{A}}\alpha_{\bs{A}}r_K}
{1-\kappa_{\bs{A}}^{\gr{K}}\!\Gr{r_K + {s_K}/{\sqrt{K}}}} \,+\,
\eps_{\bs{b}}} \normtwo{\bs{b}}$$ where $\eps_{\bs{A}}$,
$\eps_{\bs{A}}^{\gr{K}}$,~$\eps_{\bs{b}}$ are defined
in~(\ref{eq:Pert_relative}), $\kappa_{\bs{A}}^{\gr{K}}$,
$\alpha_{\bs{A}}$ in~(\ref{eq:condition_numbers}), and $r_K$, $s_K$
in~(\ref{eq:ratios}). Then the total perturbation obeys
\begin{equation} \label{eq:Total_pert}
\normtwo{\bs{Ex}} + \normtwo{\bs{e}} \;\,\le\;\,
\eps'_{\bs{A},K,\bs{b}}.
\end{equation}
\end{lem}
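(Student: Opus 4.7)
The plan is to bound $\normtwo{\bs{Ex}}$ and $\normtwo{\bs{e}}$ separately, and then express everything in terms of $\normtwo{\bs{b}}$ using the lower bound from Lemma~\ref{lem:Bound_Image_Tail}. The additive term $\normtwo{\bs{e}} \le \eps_{\bs{b}}\normtwo{\bs{b}}$ is immediate from the definition of $\eps_{\bs{b}}$ in~(\ref{eq:Pert_relative}), so the real content is in controlling the multiplicative noise $\normtwo{\bs{Ex}}$.

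First I would split the signal into head and tail, $\bs{x} = \bs{x}_K + \bs{x}_{K^c}$, and apply the triangle inequality to write $\normtwo{\bs{Ex}} \le \normtwo{\bs{E\:\!x}_K} + \normtwo{\bs{E\:\!x}_{K^c}}$. For the head term I would use the $K$-column submatrix norm bound together with the RIP for $\bs{A}$:
\[
\normtwo{\bs{E\:\!x}_K} \;\le\; \normtwo{\bs{E}}^{\gr{K}}\normtwo{\bs{x}_K}
\;\le\; \eps_{\bs{A}}^{\gr{K}}\normtwo{\bs{A}}^{\gr{K}}\normtwo{\bs{x}_K}
\;\le\; \eps_{\bs{A}}^{\gr{K}}\sqrt{1+\delta_K}\,\normtwo{\bs{x}_K}.
\]
For the tail, since $\bs{x}_{K^c}$ need not be sparse, I would fall back on the full spectral norm: $\normtwo{\bs{E\:\!x}_{K^c}} \le \eps_{\bs{A}}\normtwo{\bs{A}}\normtwo{\bs{x}_{K^c}}$. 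Using the definition $r_K = \normtwo{\bs{x}_{K^c}}/\normtwo{\bs{x}_K}$, this becomes $\eps_{\bs{A}}\normtwo{\bs{A}}\,r_K\,\normtwo{\bs{x}_K}$, so that both contributions are proportional to $\normtwo{\bs{x}_K}$.

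Next I would invoke Lemma~\ref{lem:Bound_Image_Tail} — which is exactly where assumption~(\ref{cond:Main_thm_constraint_2}) is used — to convert $\normtwo{\bs{x}_K}$ into a bound involving $\normtwo{\bs{b}} = \normtwo{\bs{Ax}}$:
\[
\normtwo{\bs{x}_K} \;\le\; \frac{\normtwo{\bs{b}}}{\sqrt{1-\delta_K}\,\bigl(1 - \kappa_{\bs{A}}^{\gr{K}}(r_K + s_K/\sqrt{K})\bigr)}.
\]
Substituting back and recognizing $\sqrt{1+\delta_K}/\sqrt{1-\delta_K} = \kappa_{\bs{A}}^{\gr{K}}$ for the head term and $\normtwo{\bs{A}}/\sqrt{1-\delta_K} = \alpha_{\bs{A}}$ for the tail term (both from~(\ref{eq:condition_numbers})) yields exactly
\[
\normtwo{\bs{Ex}} \;\le\; \frac{\eps_{\bs{A}}^{\gr{K}}\kappa_{\bs{A}}^{\gr{K}} + \eps_{\bs{A}}\alpha_{\bs{A}}r_K}{1-\kappa_{\bs{A}}^{\gr{K}}(r_K + s_K/\sqrt{K})}\,\normtwo{\bs{b}},
\]
and adding the bound on $\normtwo{\bs{e}}$ gives the claimed $\eps'_{\bs{A},K,\bs{b}}$.

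The only conceptual subtlety — and what I would flag as the main obstacle — is why it is legitimate to use the $K$-restricted norm $\normtwo{\bs{E}}^{\gr{K}}$ for the head piece while using only the full $\normtwo{\bs{E}}$ for the tail. The point is that $\bs{x}_K$ is supported on at most $K$ coordinates, so $\bs{E\:\!x}_K$ only involves a $K$-column submatrix of $\bs{E}$, whereas the tail $\bs{x}_{K^c}$ may spread across all columns and forces us to use the coarser global bound. Once this observation is in place and Lemma~\ref{lem:Bound_Image_Tail} is applied (which in turn requires the positivity condition~(\ref{cond:Main_thm_constraint_2}) so that division by $1-\kappa_{\bs{A}}^{\gr{K}}(r_K + s_K/\sqrt{K})$ is valid and yields a positive quantity), the remainder is routine algebra.
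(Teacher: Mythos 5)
Your proof is correct and takes essentially the same route as the paper: the identical head/tail split $\normtwo{\bs{Ex}} \le \normtwo{\bs{E}}^{\gr{K}}\normtwo{\bs{x}_K} + \normtwo{\bs{E}}\normtwo{\bs{x}_{K^c}}$, the same appeal to Lemma~\ref{lem:Bound_Image_Tail} (hence to condition~(\ref{cond:Main_thm_constraint_2})) to trade $\normtwo{\bs{x}_K}$ for $\normtwo{\bs{b}}=\normtwo{\bs{Ax}}$, and the same identifications of $\kappa_{\bs{A}}^{\gr{K}}$ and $\alpha_{\bs{A}}$. The paper merely organizes the algebra as a bound on the ratio $\normtwo{\bs{Ex}}/\normtwo{\bs{Ax}}$ rather than substituting the bound on $\normtwo{\bs{x}_K}$ directly, and your justification for why the head admits the restricted norm $\normtwo{\bs{E}}^{\gr{K}}$ while the tail requires the full $\normtwo{\bs{E}}$ is exactly the right observation.
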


\begin{proof}
First divide the multiplicative noise term by $\normtwo{\bs{b}}$ and
then apply Lemma~\ref{lem:Bound_Image_Tail}
\begin{eqnarray} \label{eq:second_term}
\frac{\normtwo{\bs{Ex}}}{\normtwo{\bs{Ax}}} &\le&
\frac{\Gr{\normtwo{\bs{E}}^{\gr{K}}\normtwo{\bs{x}_K} +
\normtwo{\bs{E}}\normtwo{\bs{x}_{K^c}}} \cdot
\frac{1}{\sqrt{1-\delta_K}}} {\normtwo{\bs{x}_K} -
\kappa_{\bs{A}}^{\gr{K}}\Gr{\normtwo{\bs{x}_{K^c}} +
\normone{\bs{x}_{K^c}}/{\sqrt{K}}}} \nonumber\\
&=& \frac{\Gr{\normtwo{\bs{E}}^{\gr{K}} +\, \normtwo{\bs{E}}\:\!r_K}
\cdot \frac{1}{\sqrt{1-\delta_K}}} {1 - \kappa_{\bs{A}}^{\gr{K}}
\Gr{r_K + s_K/{\sqrt{K}}}}
\nonumber\\
&\le& \frac{\eps_{\bs{A}}^{\gr{K}}\kappa_{\bs{A}}^{\gr{K}} +\,
\eps_{\bs{A}}\alpha_{\bs{A}}r_K} {1 -
\kappa_{\bs{A}}^{\gr{K}}\Gr{r_K + {s_K}/{\sqrt{K}}}}.
\end{eqnarray}
Including the contribution from the additive noise term completes
the proof.
\end{proof}

\subsection{Proof of Theorem \ref{thm:Completely_perturbed_BP}}
\label{sect:Proof_Thm_completely_perturbed_BP}
\emph{Step 1.} We duplicate the techniques used in Cand\`{e}s' proof
of Theorem~1.2 in~\cite{CanRIP}, but with decoding matrix~$\bs{A}$
replaced by~$\bs{\hat{A}}$. The proof relies heavily on the RIP for
$\bs{\hat{A}}$ in Theorem~\ref{thm:RIP_perturbed}. Set the BP
minimizer in~(\ref{eq:Complete_Pert_BP}) as $\bs{z^\star} =
\bs{x}+\bs{h}$. Here,~$\bs{h}$ is the perturbation from the true
solution~$\bs{x}$ induced by $\bs{E}$ and~$\bs{e}$. Instead of
Cand\`{e}s'~(9), we now determine that the image of~$\bs{h}$
under~$\bs{\hat{A}}$ is bounded by
\begin{eqnarray}
\normtwo{\bs{\hat{A}h}} &\leq& \normtwo{\bs{\hat{A}z^\star} -
\bs{\hat{b}}} +
\normtwo{\bs{\hat{A}x} - \bs{\hat{b}}} \label{eq:Image of_pert} \\
&\le& 2\,\eps'_{\bs{A},K,\bs{b}}. \nonumber
\end{eqnarray}
The second inequality follows since both terms on the RHS
of~(\ref{eq:Image of_pert}) satisfy the BP constraint
in~(\ref{eq:Complete_Pert_BP}). Notice in the second term
that~$\bs{x}$ is a feasible solution due to
Lemma~\ref{lem:Total_observed_perturbation_bound}.

Since the other steps in the proof are essentially the same, we end
up with constants $\hat{\alpha}$ and $\hat{\rho}$ in Cand\`{e}s'
(14) (instead of~$\alpha$ and $\rho$) where
\begin{equation} \label{eq:perturbed_alpha_rho}
\hat{\alpha} \,:=\,
\frac{2\sqrt{1+\hat{\delta}_{2K}}}{1-\hat{\delta}_{2K}}, \qquad
\hat{\rho} \,:=\,
\frac{\sqrt{2}\,\hat{\delta}_{2K}}{1-\hat{\delta}_{2K}}.
\end{equation}
The final line of the proof concludes that
\begin{equation} \label{eq:Induced_perturbation_h}
\normtwo{\bs{h}} \;\le\;
\frac{2\:\!\hat{\alpha}\:\!{\gr{1+\hat{\rho}}}}{1-\hat{\rho}}
\frac{\normone{\bs{x}-\bs{x}_K}}{\sqrt{K}} \,+\,
\frac{2\:\!\hat{\alpha}}{1-\hat{\rho}}\,\eps'_{\bs{A},K,\bs{b}}.
\end{equation}
The denominator demands that we impose the condition that
$0<1-\hat{\rho}$, or equivalently
\begin{equation} \label{eq:Perturbed_RIC_upper}
\hat{\delta}_{2K} \,<\, \sqrt{2} \:-\:\! 1.
\end{equation}
The constants~$C_0$ and~$C_1$ are obtained by first substituting
$\hat{\alpha}$ and $\hat{\rho}$ from (\ref{eq:perturbed_alpha_rho})
into~(\ref{eq:Induced_perturbation_h}). Then, recalling that
$\hat{\delta}_{2K} \le \hat{\delta}_{2K\!,\:\!\max}$, substitute
$\hat{\delta}_{K\!,\:\!\max}$ from~(\ref{eq:pert_RIC_bound})
(with $K\to2K$).\\[-5pt]

\emph{Step 2.} We still need to show that the hypothesis of
Theorem~\ref{thm:Completely_perturbed_BP} implies
(\ref{eq:Perturbed_RIC_upper}). This is easily verified by
substituting the assumption of $\delta_{2K} < \sqrt{2}
\Gr{1+\eps_{\bs{A}}^{\gr{2K}}}^{-2}-1$
into~(\ref{eq:pert_RIC_bound}) (again with $K\to2K$) and the proof
is complete. \hfill $\blacksquare$

\subsection{Proof of Lemma~\ref{lem:Spectral_consequence}}
Assume (\ref{cond:Main_thm_constraint_1}) in the hypothesis of
Theorem~\ref{thm:Completely_perturbed_BP}. It is easy to show that
this implies
\begin{equation*}
\normtwo{\bs{E}}^{\gr{2K}} \;<\; \sqrt[4]{2} \:-\:\! \sqrt{1 +
\delta_{2K}}.
\end{equation*}
Simple algebraic manipulation then confirms that
$$\sqrt[4]{2} \:-\:\!
\sqrt{1 + \delta_{2K}} \;<\; \sqrt{1 - \delta_{2K}} \;\le\;
\sigma_{\min}^{\gr{2K}}\gr{\bs{A}}.$$ Therefore,
(\ref{eq:Spectral_consequence}) holds with $k=2K$. Further, for any
$k\le2K$ we have $\sigma_{\max}^{\gr{k}}\gr{\bs{E}} \le
\sigma_{\max}^{\gr{2K}}\gr{\bs{E}}$ and
$\sigma_{\min}^{\gr{2K}}\gr{\bs{A}} \le
\sigma_{\min}^{\gr{k}}\gr{\bs{A}}$, which proves the first part of
the lemma. The second part is an immediate consequence.
\hfill $\blacksquare$\\

\section{Classical $\ell_2$ Perturbation Analysis} \label{sect:Classical_Pert_Analysis}
Let the subset $T \subseteq \set{1,\ldots,n}$ have cardinality $|T|
= K$, and note the following \emph{$T$-restrictions}:
$\bs{A}_T\in\C^{m\times K}$ denotes the submatrix consisting of the
columns of $\bs{A}$ indexed by the elements of $T$, and similarly
for $\bs{x}_T\in\C^K$.

Suppose the ``oracle'' case where we already know the support~$T$ of
$\bs{x}_K$, i.e., the best $K$-sparse representation
of~$\bs{x}$.\footnote{Although perhaps slightly confusing, note that
$\bs{x}_K\in\C^n$, while $\bs{x}_T\in\C^K$. Restricting $\bs{x}_K$
to its support $T$ yields~$\bs{x}_T$.} By assumption, we are only
interested in the case where $K\le m$ in which $\bs{A}_T$ has full
rank. Given the completely perturbed observation of
(\ref{eq:Pert_observation}), the \emph{least squares problem}
consists of solving:
\begin{equation*}
\bs{z}^\#_T \,=\, \argmin_{\bs{\hat{z}}_T}
\normtwo{\bs{\hat{A}}_T\bs{\hat{z}}_T - \bs{\hat{b}}}.
\end{equation*}
Since we know the support $T$, it is trivial to extend~$\bs{z}^\#_T$
to $\bs{z}^\#\in\C^n$ by zero-padding on the complement of $T$. Our
goal is to see how the perturbations $\bs{E}$ and $\bs{e}$
affect~$\bs{z}^\#$. Using Golub and Van Loan's model
(\!\!\cite{GolubVanLoan}, Thm.~5.3.1) as a guide, assume
\begin{equation} \label{eq:Pert_assumption}
\max\AutoSet{\frac{\normtwo{\bs{E}_T}}{\normtwo{\bs{A}_T}},\,
\frac{\normtwo{\bs{e}}}{\normtwo{\bs{b}}}} \;<\;
\frac{\sigma_{\min}\gr{\bs{A}_T}}{\sigma_{\max}\gr{\bs{A}_T}}.
\end{equation}

\begin{rem}
This assumption is fairly easy to satisfy. In fact,
assumption~(\ref{cond:Main_thm_constraint_1}) in the hypothesis of
Theorem~\ref{thm:Completely_perturbed_BP} immediately implies that
${\normtwo{\bs{E}_T}}/{\normtwo{\bs{A}_T}} <
{\sigma_{\min}\gr{\bs{A}_T}}/{\sigma_{\max}\gr{\bs{A}_T}}$ for all
$\eps_{\bs{A}}^{\gr{2K}} \in [0,\sqrt[4]{2}-1)$. To see this simply
set $k=K$ in~(\ref{eq:Spectral_consequence}) of
Lemma~\ref{lem:Spectral_consequence}, and note that
$\normtwo{\bs{E}_T}\le\normtwo{\bs{E}}^{\gr{K}}$ and
$\sigma_{\min}^{\gr{K}}\gr{\bs{A}}\le\sigma_{\min}\gr{\bs{A}_T}$.
Further, the reasonable condition of $\eps_{\bs{b}} \le
\Gr{\sqrt{2}\Gr{1+\eps_{\bs{A}}^{\gr{2K}}}^2 -1}^{1/2}$ is
sufficient to ensure $\eps_{\bs{b}} <
{\sqrt{1-\delta_{2K}}}/{\sqrt{1+\delta_{2K}}}$ so that
assumption~(\ref{eq:Pert_assumption}) holds. Note that this
assumption has no bearing on CS recovery, nor is it a constraint due
to BP. It is simply made to enable an analysis of the least squares
solution which we use as a best-case comparison below.
\end{rem}

Following the steps in \cite{GolubVanLoan} with the appropriate
modifications for our situation we obtain
\begin{eqnarray}
\normtwo{\bs{z}^\# - \bs{x}_K} &\le& \normtwo{\bs{A}_T^\dagger}
\AutoGroup{\frac{\normtwo{\bs{E}_T\bs{x}_T}}{\normtwo{\bs{Ax}}}
\,+\, \frac{\normtwo{\bs{e}}}{\normtwo{\bs{b}}}}\normtwo{\bs{b}}
\nonumber \\
&\le& \frac{1}{\sqrt{1-\delta_K}}\, \zeta'_{\bs{A},K,\bs{b}}
\nonumber
\end{eqnarray}
where $\bs{A}_T^\dagger = \gr{\bs{A}_T^*\!\bs{A}_T}^{-1}\bs{A}_T^*$
is the left inverse of~$\bs{A}_T$ whose spectral norm
$$\normtwo{\bs{A}_T^\dagger} \,\le\,
\frac{1}{\sqrt{1-\delta_K}},$$
and where
$$\zeta'_{\bs{A},K,\bs{b}} \,:=\, \GRg{
\frac{\kappa_{\bs{A}}^{\gr{K}}\;\!\eps_{\bs{A}}^{\gr{K}}}
{1-\kappa_{\bs{A}}^{\gr{K}}\Gr{r_K + {s_K}/{\sqrt{K}}}}
}\normtwo{\bs{b}}$$
was obtained using the same steps as
in~(\ref{eq:second_term}). Finally, we obtain the total least
squares stability expression
\begin{eqnarray} \label{eq:LS_Pert_relative_error_Compressible}
\normtwo{\bs{z}^\# - \bs{x}} &\le& \normtwo{\bs{x} - \bs{x}_K}
\,+\, \normtwo{\bs{z}^\# - \bs{x}_K}\nonumber\\
&\le& \normtwo{\bs{x} - \bs{x}_K} \,+\,
C_2\:\!\zeta'_{\bs{A},K,\bs{b}},
\end{eqnarray}
with $C_2 = 1/\sqrt{1-\delta_K}$.

\subsection{Comparison of LS with BP}
Now, we can compare the accuracy of the least squares solution
in~(\ref{eq:LS_Pert_relative_error_Compressible}) with the accuracy
of the BP solution found in~(\ref{eq:Pert_error_soln_CS_l1}).
However, this comparison is not really appropriate when the original
data is compressible since the least squares solution~$\bs{z}^\#$
returns a vector which is strictly $K$-sparse, while the BP
solution~$\bs{z}^\star$ will never be strictly sparse.

To make the comparison fair, we need to assume that $\bs{x}$ is
strictly $K$-sparse. Then, as mentioned previously, the constants
$r_K = s_K =~0$ and the solutions enjoy stability of
\begin{equation*}
\normtwo{\bs{z}^\# - \bs{x}} \;\le\; C_2\:\!
\AutoGroup{\kappa_{\bs{A}}^{\gr{K}}\;\!\eps_{\bs{A}}^{\gr{K}} \,+\,
\eps_{\bs{b}}}\normtwo{\bs{b}},
\end{equation*}
and
\begin{equation*}
{\normtwo{\bs{z^\star} - \bs{x}}} \;\le\; C_1 \AutoGroup{
\kappa_{\bs{A}}^{\gr{K}}\;\!\eps_{\bs{A}}^{\gr{K}} \,+\,
\eps_{\bs{b}}} \normtwo{\bs{b}}.
\end{equation*}

Yet, a detailed numerical comparison of $C_2$ with $C_1$, even at
this point, is still is not entirely valid, nor illuminating. This
is due to the fact that we assumed the oracle setup in the least
squares analysis, which is the best that one could hope for. In this
sense, the least squares solution we examined here can be considered
a ``best, worst-case" scenario. In contrast, the BP solution really
should be thought of as a ``worst, of the worst-case" scenarios.

The important thing to glean is that the accuracy of the BP and the
least squares solutions are both on the order of the noise level
$$\AutoGroup{
\kappa_{\bs{A}}^{\gr{K}}\;\!\eps_{\bs{A}}^{\gr{K}} \,+\,
\eps_{\bs{b}}} \normtwo{\bs{b}}$$ in the perturbed observation. This
is an important finding since, in general, no other recovery
algorithm can do better than the oracle least squares solution.
These results are analogous to the comparison by Cand\`es, Romberg
and Tao in~\cite{CanRomTao_Noise}, although they only consider the
case of additive noise~$\bs{e}$.

\section{Conclusion} \label{sect:Conclusion}
We introduced a framework to analyze general perturbations in CS and
found the conditions under which BP could stably recover the
original data. This completely perturbed model extends previous work
by including a multiplicative noise term in addition to the usual
additive noise term.

Most of this study assumed no specific knowledge of the
perturbations~$\bs{E}$ and~$\bs{e}$. Instead, the point of view was
in terms of their worst-case relative perturbations $\eps_{\bs{A}},
\eps_{\bs{A}}^{\gr{K}}, \eps_{\bs{b}}$. In real-world applications
these quantities must either be calculated or estimated. This must
be done with care owing to their role in the theorems presented
here.

We derived the RIP for perturbed matrix~$\bs{\hat{A}}$, and showed
that the penalty on the spectrum of its $K$-column submatrices was a
graceful, linear function of the relative perturbation
$\eps_{\bs{A}}^{\gr{K}}$. Our main contribution,
Theorem~\ref{thm:Completely_perturbed_BP}, showed that the stability
of the BP solution of the complectly perturbed scenario was limited
by the total noise in the observation.

Simple numerical examples demonstrated how the multiplicative noise
reduced the accuracy of the recovered BP solution. Formal numerical
simulations were performed on strictly $K$-sparse signals with no
additive noise so as to highlight the effect of
perturbation~$\bs{E}$. These experiments appear to confirm the
conclusion of Theorem~\ref{thm:Completely_perturbed_BP}: the
stability of the BP solution scales linearly with
$\eps_{\bs{A}}^{\gr{K}}$.

We also found that the rank of~$\bs{\hat{A}}$ did not exceed the
rank of~$\bs{A}$ under the assumed conditions. This permitted a
comparison with the oracle least squares solution.

It should be mentioned that designing matrices and checking for
proper RICs is still quite elusive. In fact, the only matrices which
are known to satisfy the RIP (and which have $m~\sim~K$ rows) are
random Gaussian, Bernoulli, and certain partial unitary (e.g.,
Fourier) matrices (see, e.g., \cite{CanTao_NearOptRecovery},
\cite{MPJ06:Uniform}, \cite{RudelsonVershynin_FourGauss}).

\appendix{}
\section*{Different cases of perturbation $\bs{E}$}
There are essentially two classes of perturbations~$\bs{E}$ which we
care most about: \emph{random} and \emph{structured}. The nature of
these perturbation matrices will have a significant effect on the
value of $\normtwo{\bs{E}}^{\gr{K}}$, which is used in determining
$\eps_{\bs{A}}^{\gr{K}}$ in (\ref{eq:Pert_relative}). In fact,
explicit knowledge of~$\bs{E}$ can significantly improve the
worst-case assumptions presented throughout this paper. However, if
there is no extra knowledge on the nature of~$\bs{E}$, then we can
rely on the ``worst case'' upper bound using the full matrix
spectral norm: $\normtwo{\bs{E}}^{\gr{K}} \le~\normtwo{\bs{E}}.$

\subsection{Random Perturbations} \label{sect:Random_Perts}
Random matrices, such as Gaussian, Bernoulli, and certain partial
Fourier matrices, are often amenable to analysis with the RIP. For
instance, suppose that $\bs{E}$ is simply a scaled version of a
random matrix $\bs{R}$ so that $\bs{E} = \beta\bs{R}$ with
$0<\beta\ll1$. Denote~$\delta^{\bs{R}}_K$ as the RIC associated with
the matrix $\bs{R}$. Then for all $K$-sparse~$\bs{x}$ the RIP for
matrix $\bs{E}$ asserts
\begin{equation*}
\beta^2 \gr{1-\delta^{\bs{R}}_K}\normtwo{\bs{x}}^2 \,\le\,
\normtwo{\bs{Ex}}^2 \,\le\, \beta^2
\gr{1+\delta^{\bs{R}}_K}\normtwo{\bs{x}}^2,
\end{equation*}
which immediately gives us
\begin{equation*}
\normtwo{\bs{E}}^{\gr{K}} \,\le\, \beta\:\!
\sqrt{1+\delta^{\bs{R}}_K},
\end{equation*}
and thus
\begin{equation*}
\frac{\normtwo{\bs{E}}^{\gr{K}}} {\normtwo{\bs{A}}^{\gr{K}}} \:\le\:
\beta\:\! \frac{\sqrt{1+\delta^{\bs{R}}_K}}{\sqrt{1-\delta_K}}
\:=:\: \eps_{\bs{A}}^{\gr{K}}.
\end{equation*}

\subsection{Structured Perturbations} \label{sect:Structured_Perts}
Structured matrices (e.g., Toeplitz, banded) are ubiquitous in the
mathematical sciences and engineering. In the CS scenario, suppose
for example that $\bs{E}$ is a partial circulant matrix obtained by
selecting $m$ rows uniformly at random from an $n\times n$ circulant
matrix. An error in the modeling of a communication channel could be
represented by such a partial circulant matrix. When encountering a
structured perturbation such as this it may be possible to exploit
its nature to find a bound $\normtwo{\bs{E}}^{\gr{K}}\le C$.

A complete circulant matrix has the property that each row is simply
a right-shifted version of the row above it. Therefore, knowledge of
any row gives information about the entries of all of the rows. This
is also true for a partial circulant matrix. Thus, with this
information we may be able to find a reasonable upper bound on
$\normtwo{\bs{E}}^{\gr{K}}$. The interested reader can find
relevant literature at~\cite{Rice_CS_resources}.

\section*{Acknowledgment}
The authors would like to thank Jeffrey Blanchard at the University
of Utah, Deanna Needell and Albert Fannjiang at the University of
California, Davis and the anonymous reviewers. Their comments and
suggestions helped to make the current version of this paper much
stronger.

\bibliography{matt_refs}

\end{document}